\documentclass[letter,11pt]{article}
\usepackage[letterpaper,text={6in,8.5in}]{geometry}

\usepackage{authblk}
\usepackage{hyperref}
\usepackage{amsfonts}
\usepackage{amsmath}
\usepackage{amsthm}
\usepackage{graphicx}
\usepackage{color}

\usepackage{algorithm}
\usepackage{algpseudocode}
\algnewcommand{\IIf}[1]{\State\algorithmicif\ #1\ \algorithmicthen}
\algnewcommand{\EndIIf}{\unskip\ \algorithmicend\ \algorithmicif}

\theoremstyle{plain}
\newtheorem{theorem}{Theorem}[section]
\newtheorem{lemma}[theorem]{Lemma}

\newtheorem{corollary}[theorem]{Corollary}
\theoremstyle{definition}

\newcommand{\fix}[1]{\textcolor{red}{\textbf{#1}}}

\newcommand{\Xomit}[1]{}

\begin{document}

\pagenumbering{gobble}
\begin{titlepage}

{ 
    \title{Idle Ants Have a Role
	    {\let\thefootnote\relax\footnote{This research was supported by the Israel Science Foundation (grant 1386/11).}}
    }

    \author[1]{Yehuda Afek}
    \author[2]{Deborah M. Gordon}
	\author[1]{Moshe Sulamy}
    \affil[1]{Tel-Aviv University}
    \affil[2]{Stanford University}
    \date{}
    \maketitle
}

    \begin{abstract}
		Using elementary distributed computing techniques we suggest an explanation for two unexplained phenomena in regards to ant colonies,
		(a) a substantial amount of ants in an ant colony are idle, and (b) the observed low survivability of new ant colonies in nature.
		Ant colonies employ task allocation, in which ants progress from one task to the other, to meet changing demands introduced by the environment.
		Extending the biological task allocation model given in [Pacala, Gordon and Godfray 1996]
		we present a distributed algorithm which mimics the mechanism ants use to solve task allocation efficiently in nature.
		Analyzing the time complexity of the algorithm reveals an exponential gap on the time it takes an ant colony to satisfy a certain work demand with and without idle ants.
		We provide an $O(\ln n)$ upper bound when a constant fraction of the colony are idle ants, and a contrasting lower bound of $\Omega(n)$ when there are no idle ants, where $n$ is the total number of ants in the colony.
    \end{abstract}

    \hfill
	\begin{center}
		
	\end{center}

\end{titlepage}
\pagenumbering{arabic}

\section{Introduction}
\label{section_intro}

Biological and distributed systems have a lot in common, and
the study of one system may inspire new observations on the other
system in a reciprocal manner \cite{AfekZiv}.
Both fields have a strong distributed aspect, involving many entities that communicate
locally to solve a global problem. 
In \cite{AfekZiv} a collection of cells solve the Maximal Independent Set (MIS) problem
through simple communication between neighboring cells;
the MIS cells are those that grow a bristle on the fruit-fly forehand. 
In \cite{Feinerman2012Collaborative,ANTS1,ANTS2,ANTS3,ANTS4,ANTS5,ANTS6} the ants in a colony collaboratively search and find a food item placed
at some arbitrary distance from the nest.  In \cite{CornejoDLN14} the
ants in a colony switch tasks according to an indication about
the work loads in order to achieve an optimal division of labor.
Here, inspired by Gordon \cite{GordonInteraction} and Pacala et al. \cite{Pacala},
we extend the biological task allocation model of \cite{Pacala}
to consider the communication aspects and the more local
and distributed nature of the task allocation problem in ant colonies,
and utilize elementary distributed computing techniques to provide possible explanations for phenomena observed by biologists.

According to \cite{GordonBook} at each point of time the ants in an ant colony are partitioned into several tasks.
Examples of tasks include: brood tending, cleaning, patrolling, and foraging for food.
Once an ant is in a task, it can be in one of three states: working, not-needed, or idle.
A \emph{working} ant is successfully engaged in its current task,
a \emph{not-needed} ant is engaged in its current task but has no work to do,
while an \emph{idle} ant is not actively doing anything.
To achieve optimal task allocation, the ants communicate locally with each other
to decide whether to switch from one task to another.
A not-needed ant in task $t_1$ that senses, through the accumulation of one-to-one contacts,
a sufficient number of working ants in task $t_2$, switches tasks from $t_1$ to $t_2$.
The ant is thus "recruited" to task $t_2$.
This has the outcome that an ant switches tasks if it is not needed in its current task
and there is high demand from the following task.
Notice that an ant has no knowledge of the available tasks, and what it does next is influenced by interactions \cite{AdditionalGordon}.
Thus, ants switch tasks \emph{only} through encounters, and not independently.
This distributed recruitment algorithm dictates the changes among the tasks.


According to biologists, task allocation in some species involves an irreversible sequence of tasks,
where ants can only promote in their task switching \cite{GordonBook,GordonInteraction,gordon1989dynamics}.
An individual advances in only one way through a series of tasks, and does not move backwards in this sequence.
For example, harvester ants proceed through tasks as:
Nursing $\rightarrow$ Cleaning $\rightarrow$ Patrolling $\rightarrow$ Foraging.
We incorporate this in our model as well, allowing ants to switch tasks only one-way,
by interacting with working (or idle) ants of the subsequent task.

In recent years, biological observations \cite{lazybones,WhyLazy,LazyInNature} discovered
that, in many species of ants, a substantial amount (20\%-50\%) of the ants in the colony
are idle, even when there is work to be done.
We suggest a possible explanation for this by showing that if a constant fraction of the colony are idle ants,
the time it takes to converge into an appropriate allocation of ants to tasks is improved exponentially.
We also consider how our model may help to explain another observation,
that in some species the survival rate of young nests is low.
For example, in a population of harvester ants, only 10\% of new colonies survive the first year \cite{LowSurvive,GordonBook}.

We consider the \emph{task allocation} problem, in which a set of ants start in some initial assignment to tasks,
and the goal is to switch tasks until a demand assignment is reached.
The \emph{demand} specifies a lower bound on the number of ants required in each task,
reflecting the workload imposed by external environmental conditions,
such as the amount of food to be obtained, debris to be cleared and so on.

Existing observations and experiments suggest that the rate of encounter is crucial in the decision of an ant on which task to perform next \cite{AdditionalGordon,InterRate,InterRateRegulation}.
In our model, called ANTTA (ANTs Task Allocation), we support \emph{rate of interaction}, determining the number of interactions for each ant before its decision point.
We assume the rate is constant throughout our protocols.

The ANTTA model assumes a synchronous round based system where,
in each round, an ant decides on its task in the next round according to multiple interactions with other ants in the colony.
The interacting ants exchange their local information,
i.e., which task each belongs to, and their state.
We say an ant is \emph{recruited} if it switches to another task,
which occurs if it is not needed in its current task and encounters a sufficient number of working or idle ants in another task.

\Xomit{
	According to biologists \cite{GordonBook,GordonInteraction,gordon1989dynamics},
	ants can only promote in their task switching.
	Thus, in ANTTA we extend the previous models and present a task allocation model
	where ants can only switch tasks one-way, by interacting with successful ants of the following task.
}

\subsection{Contributions}

The main contribution of this paper is the usage of elementary computing and
distributed computing techniques  to suggest explanations of biological phenomena.

Biologists observed two phenomena in nature which are, to the best of our knowledge, as of yet unexplained.
First, a constant fraction (20\%-50\%) of the ants in the colony are idle, even when there is work to be done.
Second, the survival rate of new nests is very low, having only 10\% of new nests survive through the first year \cite{LowSurvive,GordonBook}.
We argue that our model and results provide possible explanations for each of these phenomena.

To explains the first phenomenon we prove a lower bound of $\Omega(n)$ rounds on ANTTA when there are no idle ants in the colony.
We then present a recruitment algorithm that solves ANTTA in $O(\ln{n})$ rounds when a constant fraction of the colony are idle ants,
thus presenting an exponential gap on the task allocation time complexity with or without idle ants, providing a possible explanation for their presence.

As to the second phenomenon, biologists observed that as nests mature, their percentage of idle ants increases \cite{GordonBook,gordon1989dynamics}.
Young nests contain very few or no idle ants, and thus, by our hypothesis, 
have a slow response time to sudden changes in demands for different tasks,
providing a possible explanation for their low survivability.
When encountering such demands, the nest may be incentivized to turn more ants into idle, in order to improve the response time of task allocation;
however, the idle ants need to be distributed among the tasks to which they recruit (here we assume that each idle ant recruits to a specific task only).
We show that this process takes exponentially longer than ANTTA with idle ants, and thus the reaction time of young nests to sudden changes in demand remains slow ($\Omega(n)$).
We therefore extend our model to discuss the process by which idle ants are distributed among the tasks,
where the goal is to reach a state in which a constant fraction of idle ants are in each task, enabling the fast solution of task allocation.
We present a lower bound of $\Omega(n)$ rounds required to distribute the idle ants.
Thus, when a new demand is encountered for the first time,
young nests still have to solve the idle ants distribution,
and are thus still slow to respond, providing further explanation for their low survivability.

\section{ANTTA Model}
\label{section_model}

We consider a synchronous system that progresses in rounds.
Let $n$ denote the number of ants,
$t$ the number of different tasks that are available,
and $T_a^r \in \{1,\dots,t\}$ the task an ant $a$ belongs to in round $r$.
Denote by $X^r=\{x_1^r,\dots,x_t^r\}$, where $x_i^r$ is the number of ants in task $i$ at round $r$,
the \emph{assignment} vector at round $r$.
We denote the \emph{demand} by $D=\{d_1,\dots,d_t\}$,
where $d_i$ is the required number of ants for task $i$.
A demand $D$ that satisfies $\sum_i d_i \leq \gamma n$ is called \emph{$\gamma$-demand}.
For simplicity we assume that $d_i>0$ and $x_i^r>0$ for each $i,r$,
and that every demand is a $\gamma$-demand for some $\gamma \in \left[ 0,1 \right]$.

Denote by $S_a^r \in \{I,W,N\}$ the state of ant $a$ in round $r$,
which indicates whether an ant is Idle (I),
Working and engaged in its current task (W), or Not-needed (N).
Let $B = \{\beta_1,\dots,\beta_t\}$ denote the \emph{idle factors}, such that there are $\beta_i n$ Idle ants in task $i$,
where $0 \leq \beta_i \leq 1$, thus $\beta_i$ is the percentage of idle ants in task $i$, out of the total number of ants.
Other ants never become idle, and idle ants never change to a different state.
Idle ants do not engage in their task or initiate interactions,
and thus do not count towards the \emph{assignment} or \emph{demand},
though they are interacted with.

Let $\alpha = 1-\sum_i \beta_i$,
we say that the demand $D$ is \emph{satisfiable} by the assignment $X^0$ and the idle factors
if it is an $\alpha$-demand and it holds that $\forall i$: $d_i \leq \sum\limits_{j \leq i} (x_j^0 - \beta_j n)$.

Denote by $R$ the \emph{interaction rate}, the number of interaction an ant performs each round; we assume $R$ is constant.
At the beginning of each round $r$, $\min(d_i, x_i^r)$ ants are assigned $W$ in each task $i$,
arbitrarily chosen from all non-Idle ants in task $i$, and the rest non-idle are assigned $N$.
Each non-idle ant $a$ then performs $R$ interactions with other ants.
In each interaction ant $a$ selects, uniformly at random, one ant out of the entire nest to interact with.
Denote by $b_1, \dots, b_R$ the set of ants $a$ interacted.

In \emph{ANTTA}, task switching is only done one-way, as observed in nature \cite{GordonBook}.
Notice that ants do not switch tasks independently, only through interactions \cite{AdditionalGordon}.
Thus, if the task of ant $a$ directly precedes that of ant $b_i$ for some $i$,
it may decide whether to switch to the task of $b_i$, or remain in its current task,
i.e., if $T_a^r+1 \in \{ T_{b_1}^r, \dots, T_{b_R}^r \}$ then it holds that $T_a^{r+1} \in \{T_a^r$, $T_a^r+1\}$,
otherwise $T_a^{r+1}=T_a^r$.

In the \emph{task allocation problem},
an adversary determines the \emph{initial} assignment $X^0$ and a \emph{satisfiable} demand $D$.
The goal of the protocol is to reach an assignment which matches the demand.
A protocol terminates successfully in the first round $f$
in which the assignment meets (or exceeds) the demand,
i.e., when $\forall i$: $x_i^f \geq d_i$.

\section{Recruitment Algorithm}
\label{section_algorithm}

Here we present a task allocation algorithm whose expected time complexity when there is a constant fraction of idle ants is $O(\ln n)$ in the worst case.
For completeness, we additionally show that the time complexity of the algorithm
when there are no idle ants is $O(n \ln n)$.
Our algorithm is based on the biological model of ants task allocation and interactions,
as described in \cite{GordonBook,Pacala}.

\subsection{Algorithm}

The Recruitment Algorithm is based on the idea that,
if a not-needed ant encounters a sufficient number of working ants in the next task,
there is a surplus of ants in its current task and the demand is probably better met by switching to the next task.

Each round, each ant interacts with $R$ other ants chosen at random from all the ants.
If a not-needed ant interacts with a sufficient number of
idle or working ants from its subsequent task,
it switches to that task, in all other cases the ant remains in its current task.
This switch is likely since the probability to interact with a working ant
from the subsequent task is higher when the demand in the next task
has not yet been satisfied and, as we claim later,
recruitment by idle ants dramatically improves the time it takes to meet certain demands.
This in effect means working and idle ants "recruit" other, not-needed ants
to their task, as observed in ant colonies \cite{Pacala,GordonBook,AdditionalGordon}.
The sufficient number of interactions with working or idle ants
required in order for a not-needed ant to switch tasks
is determined by a threshold parameter $th$, where $1 \leq th \leq R$.

In each round $r$, each non-idle ant performs the \texttt{Decide} method (see Algorithm \ref{alg_recruit}).  
The \texttt{Decide} method for ant $a$
receives as arguments $T_a^r, S_a^r$, the current task and state of $a$,
and $\{ T_{b_1}^r, S_{b_1}^r, \dots, T_{b_R}^r, S_{b_R}^r \}$, the current task and state of all $R$ ants with which $a$ interacts with in round $r$.
Recall that idle ants do not initiate interactions and thus their \texttt{Decide} method is not invoked.


\begin{algorithm}
	\caption{Recruitment Algorithm; \texttt{Decide} (performed each round by non-idle ants).}
	\label{alg_recruit}
	\begin{algorithmic}[1]
		\Function{Decide}{$T_a^r,S_a^r, T_{b_1}^r, S_{b_1}^r, \dots, T_{b_R}^r, S_{b_R}^r$}
		\If{$(S_a^r = W)$}							\Comment{Current working}
		\State Return							\Comment{No change}
		\Else										\Comment($a$ is not-needed)
		\State Set $C := \left|\{ k | S_{b_k}^r \neq N, T_{b_k}^r = T_a^r + 1  \}\right|$	\Comment{Idle and working of next task}
		\If{$(C \geq th)$}						\Comment{Encountered more than the threshold}
		\State Set $T_a^{r+1} :=T _a^r + 1$		\Comment{Switch to next task}
		\EndIf
		\EndIf
		\EndFunction
	\end{algorithmic}
\end{algorithm}

We now analyze the time complexity of Algorithm \ref{alg_recruit}, with and without idle ants in the colony.
We show that, when idle ants in each task constitute a constant fraction of the total number of ants,
the expected time complexity for the algorithm is $O(\ln{n})$ rounds in the worst case,
presenting an exponential gap from the lower bound (presented below in Section \ref{section_bounds})
and providing a possible explanation for the existence of idle ants.
For completeness, we show that without idle ants, the expected time complexity for the algorithm
is $O(n \ln n)$ in the worst case.

\subsection{Analysis}

\subsubsection{No Idle Ants}


\begin{theorem}
	\label{theorem_alg_runtime}
	When there are no idle ants, the expected time complexity of Algorithm \ref{alg_recruit}
	is $\Theta(n \ln{n})$ rounds in the worst case.
\end{theorem}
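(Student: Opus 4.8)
The plan is to prove matching upper and lower bounds of $\Theta(n\ln n)$ for the worst-case expected running time when $B$ is the zero vector. I would handle the two directions separately, since the upper bound is essentially a coupon-collector argument run in parallel across tasks, while the lower bound requires exhibiting a specific adversarial instance on which progress is necessarily slow.

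For the upper bound I would first reduce the multi-task dynamics to a single bottleneck. With no idle ants, a not-needed ant in task $i$ switches to task $i+1$ exactly when at least $th$ of its $R$ interaction partners are working ants of task $i+1$. Since $R$ and $th$ are constants and at least one working ant exists in each task (we assumed $x_i^r>0$ and $d_i>0$, so $\min(d_i,x_i^r)\ge 1$), the probability that a single not-needed ant is recruited in a given round is bounded below by a positive constant $p$ depending only on $R,th$ and the (constant) fraction of working ants it can contact. I would argue that the demand is satisfied once every task $i$ has accumulated $d_i$ ants, and that each individual ant that still needs to advance does so in each round independently with probability at least $p$. The number of ``hops'' any one ant must make along the chain is at most $t\le n$, and I would bound the time for the slowest-advancing ant by a geometric-tail / union-bound argument: after $c\ln n$ rounds a given required advance fails with probability $n^{-\Omega(1)}$, and there are at most $O(n)$ such advances to complete, so multiplying the per-hop $O(\ln n)$ bound by the $O(n)$ chain length and taking a union bound over ants yields $O(n\ln n)$ expected rounds. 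The $\ln n$ factor comes from the last few stragglers (coupon-collector), and the $n$ factor from the length of the irreversible task chain the adversary can force an ant to traverse.

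For the matching lower bound I would construct an explicit satisfiable instance forcing $\Omega(n\ln n)$. Take $t=\Theta(n)$ tasks arranged in the irreversible chain, with the initial assignment placing a surplus only in the first task and demand requiring ants to propagate all the way to task $t$. Because switching is one-way and driven only by interactions with the \emph{next} task, an ant destined for task $t$ must be recruited $t-1$ separate times, and crucially it can only be recruited into task $j+1$ once working ants already populate task $j+1$; this serializes progress so that information (ants) propagates down the chain at a limited rate. I would lower-bound the expected time for the wavefront to reach the final task, and then invoke the coupon-collector phenomenon for the final fill-up of the last task to extract the extra $\ln n$ factor: the last $\Theta(1)$ ants needed in the bottleneck task each wait an expected $\Omega(\ln n)$ rounds because the probability any one specific straggler meets the threshold in a round is $O(1/n)$ once the relevant populations are thin. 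Combining the $\Omega(n)$ chain-propagation delay with the $\Omega(\ln n)$ straggler delay at the appropriate scale gives $\Omega(n\ln n)$.

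The main obstacle I anticipate is making the lower bound rigorous: the dynamics are a coupled stochastic process where recruitment probabilities depend on the time-varying populations $x_i^r$, so I cannot treat hops as independent. I would control this by setting up a potential or wavefront argument that tracks the furthest task reached by any ant and showing this potential increases slowly in expectation, together with a separate martingale/concentration argument isolating the final-task straggler phase. Care is needed because the $n$ factor and the $\ln n$ factor arise from structurally different parts of the instance (chain length versus coupon-collector tail), so I would design the instance so that these two effects compose multiplicatively rather than merely additively — e.g.\ by forcing a coupon-collector bottleneck at each of a constant number of stages along an $\Omega(n)$-length chain, or by a direct lower bound on the expected completion time of the specific last-ant advance. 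Verifying that the constructed demand is genuinely \emph{satisfiable} under the model's definition (the prefix-sum condition $d_i\le\sum_{j\le i}(x_j^0-\beta_j n)$ with $\beta_j=0$) is a routine but necessary check I would include.
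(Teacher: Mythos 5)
Your proposal contains two genuine gaps, and — tellingly — they cancel each other numerically, which is why your totals come out as $n\ln n$ despite a wrong structure. In the upper bound you claim each not-needed ant is recruited per round with probability at least a positive \emph{constant} $p$, justified by ``the (constant) fraction of working ants it can contact.'' This is exactly what fails without idle ants, and is the entire point of the theorem: the only recruiters into task $i+1$ are its working ants, of which there may be just one (you yourself note $\min(d_{i+1},x_{i+1}^r)\geq 1$, and $d_{i+1}=1$ is possible), so the per-round recruitment probability is $\Theta(1/n)$ for $th=1$ (and of order $n^{-th}$ in general), not constant. Your $O(n\ln n)$ then only emerges by multiplying this erroneous per-hop $O(\ln n)$ by a chain of length $O(n)$ hops ($t\leq n$) — but the theorem lives in the constant-$t$ regime: the paper's upper-bound proof multiplies the per-task drain time by $t$ and explicitly invokes ``the number of tasks $t$ is constant.'' The correct accounting is the reverse of yours: each surplus ant leaves a task in expected $\Theta(n)$ rounds (lone recruiter), up to $\Theta(n)$ surplus ants must all leave, and the maximum of $\Theta(n)$ independent geometrics with mean $\Theta(n)$ is $\Theta(n\ln n)$; summing over constantly many tasks gives $O(n\ln n)$.

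The lower bound has the same regime error plus an internal inconsistency, and is never actually completed. Your instance with $t=\Theta(n)$ tasks is inadmissible: if it were allowed, your own wavefront reasoning (each of $\Theta(n)$ serialized hops requires expected $\Theta(n)$ rounds to meet the single next-task worker) would yield an $\Omega(n^2)$-type bound, contradicting the theorem's $O(n\ln n)$ upper bound — the construction proves too much, revealing the parameter mismatch. Separately, your straggler step asserts an $\Omega(\ln n)$ wait from a per-round success probability of $O(1/n)$; that probability gives an expected wait of $\Omega(n)$, not $\Omega(\ln n)$. You correctly sense the difficulty (making the $n$ and $\ln n$ factors ``compose multiplicatively rather than merely additively'') but never resolve it; in fact they need not come from ``structurally different parts of the instance.'' The paper extracts both factors from a \emph{single} phase with constant $t$: take $t=3$, $X^0=\{n-2,1,1\}$, $D=\{1,1,n-2\}$, so task $2$ always has exactly one working ant (its not-needed transients do not recruit, since the count $C$ in Algorithm \ref{alg_recruit} excludes state $N$). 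Each of the $n-3$ surplus ants in task $1$ then independently waits a geometric time with per-round success probability $1-(1-\tfrac{1}{n})^{R}=\Theta(R/n)$, \emph{all} of them must switch for the demand on task $3$ to be met, and the maximum of $n-3$ such i.i.d.\ geometrics is $\Theta\bigl(\tfrac{n}{R}\ln n\bigr)$: the $n$ comes from the lone-recruiter meeting probability and the $\ln n$ from requiring all $\Theta(n)$ ants to succeed, composing multiplicatively within one coupon-collector-style phase. Your proposal as written establishes neither direction.
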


\begin{proof}
	
	We prove the theorem in two steps.
	First, we show that for each assignment and demand, the algorithm terminates in at most an expected $O(n \ln{n})$ rounds.
	We then show an assignment and demand in which the algorithm takes at least an expected $\Omega(n \ln{n})$ rounds,
	proving the bound is tight and the expected time complexity is $\Theta(n \ln{n})$ rounds.
	
	\begin{lemma}\label{lemma_alg_upper}
		When there are no idle ants, the expected runtime of Algorithm \ref{alg_recruit} is at most $O(n \ln n)$ rounds.
	\end{lemma}
	
	\begin{proof}
		For the upper bound we assume $th=R$; lower values of $th$ can only cause the demand to be met faster,
		thus the bound we show holds for any $1 \leq th \leq R$.
		Let $X^0, D$ be an initial assignment and demand.
		If the assignment does not meet the demand, then there must be some task $i<t$ which exceeds the demand, such that $x_i^0 > d_i$.
		We show that after $n \ln n$ rounds, the expected number of not-needed ants remaining in task $i$ is at most one.
		In the worst case, this happens sequentially one task after the other, 
		and we get that after $t \cdot n \ln{n}$ rounds	no task (except the last) exceeds its demand, thus all tasks meet the demand, and the algorithm terminates successfully.
		Since the number of tasks $t$ is constant, the expected time complexity
		is at most $O(n \ln{n})$ rounds.
		
		Let $k_i^r = x_i^r - d_i$ denote the number of not-needed ants in task $i$ at round $r$.
		The number of recruiting (i.e., working) ants in task $i+1$ is at least $1$,
		thus for each not-needed ant in $i$,
		the probability to interact with a sufficient number of working ants in $i+1$,
		and thus be recruited to task $i+1$, is at least $\frac{1}{n^R}$.
		The probability of an ant to \emph{not} be recruited is thus at most $1-\frac{1}{n^R}$.
		Let $x$ be some number of rounds, the probability for an ant to remain in task $i$ after $x$ rounds is at most $(1-\frac{1}{n^R})^x$.
		The expected number of ants remaining in task $i$ after $x$ rounds is thus at most $k_i^r (1 - \frac{1}{n^R})^x$.
		Assigning $x = n \ln n$, after $n \ln n$ rounds the expected number of ants remaining in task $i$ is thus at most:
		$$k_i^r \cdot (1 - \frac{1}{n^R})^{n \ln n} \leq k_i^r \cdot \frac{1}{\mathrm{e}}^{\ln n} = \frac{k_i^r}{n} \leq 1 $$
		
		Thus, after $n \ln n$ rounds, the expected number of not-needed ants remaining in task $i$
		is at most one
		(which switches after an expected $n$ rounds in the worst case),
		and the overall upper bound is $O(n \ln n)$.
	\end{proof}
	
	\begin{lemma}
		When there are no idle ants, the expected time complexity of Algorithm \ref{alg_recruit} is $\Omega(n \ln n)$ rounds in the worst case.
	\end{lemma}
	
	\begin{proof}
		We will now show an initial assignment and demand in which the algorithm terminates in \emph{at least} an expected $\Omega(n \ln{n})$ rounds.
		For the lower bound we assume $th=1$; larger values of $th$ can only cause the demand to be met slower,	thus the bound we show holds for any $1 \leq th \leq R$.
		Let us define the following assignment and demand vectors for $t=3$ tasks and $n$ ants:
		$$ X^0 = \{ n-2, 1, 1 \} ; D = \{ 1, 1, n-2 \} $$
		
		We show that after $\frac{(n-1) \ln(n-3)}{R} = O(n \ln n)$ rounds,
		the expected number of not-needed ants in task $1$ is \emph{at least} one,
		thus the algorithm does not terminate and thus the overall runtime is $\Omega(n \ln n)$.
		
		The number of working ants in tasks $1$ and $2$ is always $1$,
		thus in each round $r$, the number of not-needed ants in task $1$ is $k_1^r = x_1^r - 1$,
		and each not-needed ant in task $1$ has a probability of $(1-\frac{1}{n})^R$
		to \emph{remain} in task $1$ and not be recruited to task $2$.
		Let $x$ be some number of rounds,
		the probability for a not-needed ant in task $1$ to remain
		for \emph{at least} $x$ rounds is $(1 - \frac{1}{n})^{Rx}$.
		The expected number of not-needed ants remaining in task $1$
		after $x$ rounds is thus $(n-3) (1 - \frac{1}{n})^{Rx}$.
		Assigning $x = \frac{(n-1) \ln(n-3)}{R}$, the expected number of unsuccessful ants remaining in task $1$ is:
		$$ (n-3) (1 - \frac{1}{n})^{(n-1) \ln (n-3)} > (n-3) \cdot \frac{1}{\mathrm{e}}^{\ln (n-3)} = \frac{n-3}{n-3} = 1 $$
		
		
		Thus, after $O(n \ln n)$ rounds, the expected number of not-needed ants in task $1$
		is at least one,
		meaning the demand is not satisfied, and the overall lower bound is $\Omega(n \ln n)$.
	\end{proof}
	
	We have shown an upper bound and lower bound of $O(n \ln n)$,
	and thus the expected time complexity of Algorithm \ref{alg_recruit} is $\Theta(n \ln n)$ rounds in the worst case.
\end{proof}

\subsubsection{With Idle Ants}

Here we prove the time complexity of Algorithm \ref{alg_recruit} when there are idle ants in the colony,
assuming $\beta_i > 0$ for all $i$.
If the idle factors are constant, such that $\forall i: \beta_i = O(1)$,
Corollary \ref{cor_slack_ln} shows that the expected time complexity of the algorithm is $O(\ln n)$ in the worst case.

\begin{theorem}
	\label{theorem_alg_lazy_runtime}
	When utilizing idle ants, such that $\forall i: \beta_i > 0$,
	the expected time complexity of Algorithm \ref{alg_recruit} is
	$O(\ln{n} \sum_{i=2}^t \frac{1}{\beta_i^R})$ rounds in the worst case.
\end{theorem}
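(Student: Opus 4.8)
The plan is to mirror the upper-bound argument of Lemma~\ref{lemma_alg_upper}, replacing the per-round recruitment probability by a much larger, state-independent quantity that exploits the guaranteed presence of idle ants. As before I would take $th = R$, since smaller thresholds only make recruitment easier, so the bound transfers to all $1 \le th \le R$. I would process the tasks in order $i = 1, \dots, t-1$ and argue that the surplus (not-needed) ants in task $i$ are recruited into task $i+1$; in the worst case these phases stack sequentially, and because the $\beta_i$ may differ the contributions are kept as an explicit sum over the destination tasks rather than folded into a constant.

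The key step, and the source of the exponential improvement over Theorem~\ref{theorem_alg_runtime}, is the per-round recruitment probability. Since idle ants never leave their task, task $i+1$ always contains at least $\beta_{i+1} n$ idle (hence recruiting) ants. Each of the $R$ interactions of a not-needed ant in task $i$ therefore lands on a recruiting ant of task $i+1$ with probability at least $\beta_{i+1} n / n = \beta_{i+1}$, independently, so the ant is recruited in a given round with probability at least $\beta_{i+1}^R$. Note how this replaces the feeble $1/n^R$ bound of the no-idle case (where only a single working ant was guaranteed downstream) by a constant when $\beta_{i+1}$ is constant, which is precisely the gap we are after.

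Given this, the remaining work is routine. The probability that a fixed not-needed ant still sits in task $i$ after $x$ rounds is at most $(1 - \beta_{i+1}^R)^x$, so the expected number of surplus ants there is at most $k_i^r (1 - \beta_{i+1}^R)^x \le n (1 - \beta_{i+1}^R)^x$. Choosing $x_i = \frac{\ln n}{\beta_{i+1}^R}$ and using $1 - y \le e^{-y}$ drives this below $n \cdot e^{-\ln n} = 1$. To turn ``expected number remaining $\le 1$'' into a genuine runtime bound I would note that recruiting the last ant is geometric with success probability $\ge \beta_{i+1}^R$, contributing a further $O(1/\beta_{i+1}^R)$ expected rounds; cleaner still, the time to clear all $k_i \le n$ surplus ants is the maximum of at most $n$ independent geometric variables with parameter $\ge \beta_{i+1}^R$, whose expectation is at most $(1 + \ln n)/\beta_{i+1}^R = O(\ln n / \beta_{i+1}^R)$ by a union bound on the tail. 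Summing the per-task bounds over $i = 1, \dots, t-1$ then yields $O\!\left(\ln n \sum_{i=2}^t \frac{1}{\beta_i^R}\right)$.

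The main obstacle I anticipate is not the probability estimate but the bookkeeping of the sequential argument: as ants are recruited forward they swell the population of downstream tasks, which themselves must then shed their surplus, so I must justify that summing the single-hop clearing times over the $t-1$ task boundaries is a valid upper bound on the total time. The clean reason is that the recruitment probability $\beta_{i+1}^R$ at boundary $i \to i+1$ holds regardless of the global configuration, since it depends only on the ever-present idle ants downstream; hence the phases can be analysed independently and their expected durations added. The last detail to verify is that the satisfiability hypothesis $d_i \le \sum_{j \le i}(x_j^0 - \beta_j n)$ guarantees that enough non-idle ants eventually reach each task to meet its demand.
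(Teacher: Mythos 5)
Your proposal is correct and takes essentially the same route as the paper's proof: fixing $th = R$, lower-bounding the per-round recruitment probability by $\beta_{i+1}^R$ using the ever-present $\beta_{i+1} n$ idle ants in task $i+1$, choosing $x = \frac{\ln n}{\beta_{i+1}^R}$ to drive the expected surplus in task $i$ below one, and summing the clearing phases sequentially over the task boundaries to obtain $O(\ln n \sum_{i=2}^t \frac{1}{\beta_i^R})$. Your maximum-of-geometrics treatment of the remaining ants is in fact slightly more careful than the paper, which simply notes that the last not-needed ant switches within an expected $\frac{1}{\beta_{i+1}^R}$ additional rounds.
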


\begin{proof}
	The proof is similar to Lemma \ref{lemma_alg_upper}; some details are omitted to avoid repetition.
	
	Let $X^0, D$ be an initial assignment and demand, and assume $th=R$ and $\forall x: \beta_x > 0$.
	If the assignment does not meet the demand, then there must be some task $i<t$ which exceeds the demand. 
	We show that after $\frac{\ln n}{\beta_{i+1}^R}$ rounds, the expected number of not-needed ants
	remaining in task $i$ is at most one.
	In the worst case, this happens sequentially one task after the other,
	thus after $O(\ln{n} \sum_{i=2}^t \frac{1}{\beta_i^R})$ rounds no task (except the last) exceeds its demand.
	
	Let $k_i^r = x_i^r - d_i$ denote the number of not-needed ants in task $i$ at round $r$.
	The number of recruiting (i.e., working or idle) ants in task $i+1$ is at least $\beta_{i+1} n$, since there are $\beta_{i+1} n$ idle ants in task $i+1$;
	thus for each not-needed ant in $i$,
	the probability to interact with a sufficient number of recruiting ants in task $i+1$,
	and thus be recruited to task $i+1$, is at least $\beta_{i+1}^R$.
	The probability of an ant to \emph{not} be recruited is thus at most $1 - \beta_{i+1}^R$.
	Let $x$ be some number of rounds, the probability for an ant to remain in task $i$ after $x$ rounds is at most $(1 - \beta_{i+1}^R)^x$.
	The expected number of ants remaining in task $i$ after $x$ rounds is thus at most $k_i^r (1 - \beta_{i+1}^R)^x$.
	Assigning $x = \frac{\ln n}{\beta_{i+1}^R}$, after $\frac{\ln n}{\beta_{i+1}^R}$ rounds the expected number of ants remaining in task $i$ is thus at most:
	$$ k_i^r \cdot (1 - \beta_{i+1}^R)^{\frac{\ln n}{\beta_{i+1}^R}} \leq k_i^r \cdot (\frac{1}{e})^{\ln n} = \frac{k_i^r}{n} \leq 1 $$
	
	Thus, after $\frac{\ln n}{\beta_{i+1}^R}$ rounds, the expected number of not-needed ants
	remaining in task $i$ is at most one
	(which switches after an expected $\frac{1}{\beta_{i+1}^R}$ rounds in the worst case),
	and the overall upper bound is $O(\ln{n} \sum_{i=2}^t \frac{1}{\beta_i^R})$.
\end{proof}

Theorem \ref{theorem_alg_lazy_runtime} directly brings us to the desired runtime
when the idle factors are constant,
proving that the algorithm terminates in an expected $O(\ln{n})$ rounds.

\begin{corollary}
	\label{cor_slack_ln}
	If the idle factors are non-zero constant fractions, such that $\forall i$: $0 < \beta_i \leq 1$ and $\beta_i \in O(1)$,
	the expected time complexity of Algorithm \ref{alg_recruit} is $O(\ln{n})$ rounds in the worst case.
\end{corollary}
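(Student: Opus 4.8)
The plan is to derive this directly from Theorem \ref{theorem_alg_lazy_runtime}, which already establishes the runtime bound $O(\ln{n} \sum_{i=2}^t \frac{1}{\beta_i^R})$ under the weaker hypothesis that merely $\beta_i > 0$ for all $i$. The entire task therefore reduces to showing that, under the stronger hypotheses of the corollary, the factor $\sum_{i=2}^t \frac{1}{\beta_i^R}$ is a constant independent of $n$, so that it can be absorbed into the asymptotic notation.

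First I would observe that each idle factor $\beta_i$ is, by assumption, a non-zero constant fraction: it does not depend on $n$ and is bounded below by a fixed positive constant. Since the interaction rate $R$ is assumed constant throughout the model, each term $\frac{1}{\beta_i^R}$ is itself a fixed constant, i.e.\ $\frac{1}{\beta_i^R} \in O(1)$. The key point to get right is the interpretation of ``non-zero constant fraction'': it means each $\beta_i$ is bounded away from zero by a constant independent of $n$, so that $1/\beta_i^R$ does not grow with $n$.

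Next, recall that the number of tasks $t$ is constant (as already used in Lemma \ref{lemma_alg_upper}). Hence $\sum_{i=2}^t \frac{1}{\beta_i^R}$ is a sum of a constant number of $O(1)$ terms, and is therefore itself $O(1)$. Substituting this into the bound of Theorem \ref{theorem_alg_lazy_runtime} gives $O\!\left(\ln{n} \cdot O(1)\right) = O(\ln{n})$ expected rounds in the worst case, as claimed.

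There is no genuine obstacle in this argument; the corollary is a direct specialization of Theorem \ref{theorem_alg_lazy_runtime}, and the proof amounts to noting that both the per-task factor $1/\beta_i^R$ and the number of summands are constants, after which the conclusion follows immediately by folding the constant sum into the $O(\cdot)$ notation.
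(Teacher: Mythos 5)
Your proposal is correct and matches the paper exactly: the paper treats the corollary as an immediate consequence of Theorem \ref{theorem_alg_lazy_runtime}, obtained by noting that with constant idle factors (and constant $R$ and $t$) the factor $\sum_{i=2}^t \frac{1}{\beta_i^R}$ is $O(1)$ and can be absorbed into the asymptotic bound. Your added remark that ``non-zero constant fraction'' must mean bounded away from zero independently of $n$ is the right reading of the hypothesis, though the paper leaves it implicit.
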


\section{Lower Bound}
\label{section_bounds}

We now present a lower bound of $\Omega(n)$ on any algorithm that solves
task allocation, thus showing the exponential gap between
the algorithm time complexity with idle ants, and the lower bound without idle ants.

\begin{theorem}
	\label{theorem_lowerbound_thresh}
	Any protocol that solves task allocation without utilizing idle ants
	requires at least an expected $\Omega(n)$ rounds to complete successfully in the worst case.
\end{theorem}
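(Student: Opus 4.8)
The plan is to exhibit a single hard instance and show that, under \emph{any} protocol, a linear number of ants must each independently ``find'' a lone recruiter, throttling progress to $O(1)$ ants per round in expectation. Concretely, I would reuse the three-task instance of the previous lower bound, $X^0 = \{n-2,1,1\}$ and $D = \{1,1,n-2\}$, and first check it is satisfiable with no idle ants: here every $\beta_j=0$, so the condition $d_i \le \sum_{j\le i}(x_j^0-\beta_j n)$ reads $d_i\le\sum_{j\le i}x_j^0$, which holds for $i=1,2,3$, and $\sum_i d_i = n$ makes it a $1$-demand.

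The crux is a throughput bottleneck at the middle task. Since $d_2=1$ and there are no idle ants, in every round exactly one ant in task $2$ is Working (because $\min(d_2,x_2^r)=1$ whenever $x_2^r\ge 1$, which the model guarantees), while all other task-$2$ ants are Not-needed. Recruitment into task $2$ is triggered only by interacting with a Working or Idle ant of that task, so every ant crossing from task $1$ to task $2$ must, in the round it crosses, interact with this single Working ant. A fixed ant is hit in one interaction with probability $1/n$, so by a union bound over the $R$ interactions it is hit with probability at most $R/n$; hence the expected number of ants moving from task $1$ to task $2$ in any round, conditioned on the full history, is at most $x_1^r\cdot R/n \le R$, regardless of the protocol's decision rule.

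I would then convert this per-round flow bound into a bound on the termination round $f$. Because switching is one-way and task $3$ is last, the net increase of $x_2+x_3$ over the run equals the total number of ants that ever cross from task $1$ to task $2$; meeting the demand requires $x_2^f+x_3^f\ge d_2+d_3=n-1$, while initially $x_2^0+x_3^0=2$, so at least $n-3$ ants must make this crossing. Writing $Y_r$ for the number crossing in round $r$ and $M_r=\sum_{s<r}Y_s$, the previous estimate gives $E[Y_r\mid\mathcal F_r]\le R$, so $Z_r:=M_r-Rr$ is a supermartingale with bounded increments. Optional stopping at $f$ (the case $E[f]=\infty$ being trivial) yields $E[M_f]\le R\,E[f]$, and combining with $M_f\ge n-3$ gives $E[f]\ge (n-3)/R=\Omega(n)$ since $R$ is constant.

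The main obstacle I anticipate is the stopping-time step, not the counting. The informal statement ``at most $R$ ants leave task $1$ per round, so $\ge (n-3)/R$ rounds are needed'' must be justified carefully because $f$ is a random stopping time correlated with the history. I would handle this with the supermartingale optional-stopping argument above, or equivalently by a truncation-and-limit argument applied to $E[M_{\min(f,r)}]\le R\,r$, using that the increments satisfy $Y_r\le n$ so the integrability hypotheses hold; everything else reduces to the union bound and linearity of expectation already set up.
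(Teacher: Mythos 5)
There is a genuine gap, and it sits exactly at the crux of your argument: the claim that ``recruitment into task $2$ is triggered only by interacting with a Working or Idle ant of that task.'' That restriction is a rule of the specific Recruitment Algorithm (whose threshold counts only non-$N$ encounters), not of the ANTTA model. The model's switching constraint (Section \ref{section_model}) is purely task-based: ant $a$ may move to task $T_a^r+1$ whenever \emph{any} ant of that task, including a Not-needed one, appears among its $R$ interaction partners. Since the theorem quantifies over \emph{any} protocol, your flow bound must hold against protocols that exploit this. On your instance $X^0=\{n-2,1,1\}$, $D=\{1,1,n-2\}$, it does not: every ant that crosses into task $2$ immediately becomes an enabler for further crossings regardless of its state, so the expected per-round flow is roughly $x_1^r \cdot R\,x_2^r/n$, not $\leq R$. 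A greedy protocol (say, with all but one designated ant switching whenever enabled) makes $x_2^r$ grow geometrically, and the whole demand is met in $O(\log n)$ rounds with high probability --- so this instance yields no $\Omega(n)$ bound at all. Note that this is precisely the instance the paper uses for the $\Omega(n\ln n)$ bound \emph{on Algorithm \ref{alg_recruit}}, where the working/idle rule does apply; transplanting it to the any-protocol setting is where the proof breaks.

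The paper's proof sidesteps this by choosing an instance in which the bottleneck successor task contains a single ant \emph{in total}: $X^0 = \{2,1,\dots,1,n-t\}$, $D = \{1,1,\dots,1,n-t+1\}$. There, the counting constraints force $x_1^f = 1$, so some ant must cross from task $1$ to task $2$ before termination; and before that first crossing, task $2$ never holds more than one ant, so the necessary event ``a task-$1$ ant meets any task-$2$ ant'' has per-round probability at most $2R/n$ throughout the relevant prefix, giving expected time $\Omega(n/R)=\Omega(n)$. In other words, the paper bounds the waiting time for a single necessary rare event rather than the throughput of a sustained flow, which is what makes the argument robust to arbitrary protocols. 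Your supermartingale and optional-stopping scaffolding is sound in itself --- indeed more careful than the paper's informal waiting-time step --- but it is erected on the invalid per-round flow bound; if you redo the argument, either adopt the paper's instance (where the first-crossing event suffices and no martingale machinery is needed) or find an instance in which the population of the successor task, not merely its Working count, stays $O(1)$ under every protocol.
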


\begin{proof}
	We prove the theorem by providing an example that requires at least $O(n)$ rounds to terminate successfully.
	Let us define the following assignment and demand vectors for $t \geq 3$ tasks:
	$$ X^0 = \{ 2, 1, \dots, 1, n-t \} ;
	D=\{ 1, 1, \dots, 1, n-t+1 \} $$
	
	In the above assignment, task $1$ exceeds the demand by one ant,
	and task $t$ is short of the demand by one ant.
	All other tasks exactly meet their demand.
	To satisfy the demand, an ant must switch from task $1$ to task $2$, then an ant must switch from task $2$ to task $3$,
	and so on until the demand $D$ is satisfied.
	
	Task $2$ contains a single ant;
	in a single interaction, the probability of any ant to interact with a specific ant is $\frac{1}{n}$,
	thus the probability of any ant to interact with the ant in task $2$ in a single interaction is $\frac{1}{n}$.
	Thus, an ant interacts with the ant in task $2$ after an expected $n$ total interactions.
	There are two ants in task $1$, each performing $R$ interactions per round,
	thus an ant in task $1$ interacts with an ant in task $2$ after an expected $\frac{n}{2 R} = O(n)$ rounds.
	
	Since in the ANTTA model an ant must interact with an ant in the next task in order to switch tasks,
	and task switching is done one-way,	an ant in task $1$ must interact with an ant in task $2$ in order to switch to task $2$,
	and for the demand $D$ to be met, thus the expected lower bound for any protocol solving ANTTA
	is $\Omega(n)$ rounds in the worst case.	
\end{proof}

\section{Idle Distribution}
\label{section_idle}

Idle ants exponentially reduce the time it takes the colony to meet certain demands,
as shown in Section \ref{section_algorithm}.
However,  idle ants do not exist in equal amounts in each stage of the colony's life;
biologists observed that new nests start with very few or no idle ants, and as the nest matures, the percentage of idle ants increases \cite{GordonBook,gordon1989dynamics}.
This is consistent with the possibility that each ant has certain decision points in which it might become idle with some probability.
Thus, a larger percentage of the ants become idle, until reaching a stable fraction of the nest.

Young nests contain very few or no idle ants, and thus have a slow response time to sudden changes in demands for different tasks.
When encountering such demands, the nest may be incentivized to turn more ants into idle, in order to improve the response time of task allocation.
However, these idle ants, once becoming idle, need to be distributed among the tasks to which they recruit to enable the fast solution of task allocation.

In this section we present the problem of \emph{Idle Distribution}, in which idle ants determine which task they recruit to,
where the goal is to reach a state in which a constant fraction of idle ants are in each task,
We show a lower bound of $\Omega(n)$ rounds to solve idle distribution.
Thus, when a new demand is encountered for the first time, young nests still have to solve the idle distribution,
and are thus still slow to respond, providing further explanation for their low survivability \cite{LowSurvive,GordonBook}.


\subsection{Model Extensions}

Denote by $m$ the number of idle ants, such that $\sum_i \beta_i n = m$.
Here we assume $t \leq m \leq n$.

For an \emph{idle} ant $a$, we denote by $T_a^r$ the task into which it recruits other ants; the idle ant $a$ does not actually belong (i.e., contribute) to that task.
For simplicity, we assume that when an ant becomes idle, it recruits to whatever task it was assigned previously.

Idle ants switch tasks \emph{only} through interactions and recruitment,
following the biological model and observations \cite{AdditionalGordon,Pacala};
in order to switch to task $i$ in round $r$, an idle ant must interact with an ant in task $i$ in round $r-1$,
i.e., for an \emph{idle} ant $a$, if $T_a^{r+1}=i$ then it holds that either $T_a^r=i$ or $i \in \{ T_{b_1}^r, \dots, T_{b_R}^r \}$,
where $\{ b_1,\dots,b_R \}$ is the set of ants which $a$ interacts with in round $r$.

In the \emph{idle distribution problem}, an adversary determines the number of idle ants $m$, the initial allocation to tasks of the idle ants,
and the initial assignment of non-idle ants $X^0$.
The goal of the protocol is to distribute the idle ants among the tasks, such that each task will contain a constant fraction of the idle ants.
A protocol terminates successfully in the first round $f$ in which the idle factor of each task is a constant fraction of $m$, such that $\forall i: 0 < \beta_i \leq 1, \beta_i = O(1)$.
When $m$ is a constant fraction of $n$ (the total number of ants in the colony),
the resulting idle factors of each task are non-zero constant fractions of $n$, as assumed by Corollary \ref{cor_slack_ln}.

\Xomit{
	\subsection{Model Extensions}
	
	We extend $S_a^r$, defined in Section \ref{section_model}, such that the Idle state ($I$) is split into three states:
	\begin{enumerate}
		\itemsep0em 
		\item Terminated ($I_T$), an idle ant that terminated the protocol and its recruiting task is fixed and will not change further.
		\item Searching ($I_S$), an idle ant that is searching for a different task to recruit to.
		\item Waiting ($I_W$), an idle ant that is waiting to encounter another idle \emph{Waiting} ant in the same task.
	\end{enumerate}
	
	Each ant $a$ additionally keeps a history of tasks $H_a^r \subseteq [ 1,\dots,t ]$,
	listing all the tasks $a$ rejected up to round $r$ and will not recruit to.
	For an \emph{idle} ant $x$, $T_x^r$ determines the task to which it switches, and $x$ does not actually belong (i.e., contribute) to that task.
	For simplicity, we assume that when an ant becomes idle, it recruits to whatever task it was assigned previously, in either state $I_W$ or $I_S$ (determined by an adversary).
	Idle ants switch tasks \emph{only} through recruitment, following the biological model \cite{Pacala};
	an idle ant recruits to task $i$ in round $r$ only if it recruited to task $i$ in round $r-1$, or interacted with an ant in task $i$ in round $r-1$.
	
	Let $m$ denote the number of idle ants, such that $t \leq m \leq n$.
	In the \emph{idle allocation problem}, an adversary determines $m$, the initial states and the assignment to tasks of the idle ants,
	and the initial assignment $X^0$.
	The goal of the protocol is to allocate the idle ants among the tasks,
	such that each task will contain a constant fraction of the idle ants in the \emph{Terminated} state.
	A protocol terminates successfully in the first round $f$ in which
	each task $i$ contains $c_i m$ idle ants of state $I_T$, such that $0 < c_i \leq 1$ and $c_i = O(1)$.
	
	When $m$ is a constant fraction of $n$, the total number of ants in the colony,
	the idle factor of each task is thus a non-zero constant fraction, as assumed by Corollary \ref{cor_slack_ln}.
}

\subsection{Lower Bound}

\begin{theorem}
	Any protocol that solves Idle Distribution requires at least an expected $\Omega(n)$ rounds to complete successfully in the worst case.
\end{theorem}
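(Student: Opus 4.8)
The plan is to mirror the proof of Theorem~\ref{theorem_lowerbound_thresh}: I will exhibit a single adversarial instance on which \emph{every} protocol needs $\Omega(n)$ expected rounds, the bottleneck again being a task that holds only a single ant. The adversary chooses a \emph{constant} number of idle ants, say $m=t$, and sets up $t\ge 3$ tasks so that the non-idle background is
$$ X^0 = \{\, n-2t+1,\ 1,\ \dots,\ 1 \,\}, $$
i.e.\ each task $i\ge 2$ holds exactly one non-idle ant, while all $m=t$ idle ants start in task $1$. Since $t\le m=t\le n$ this is a legal instance, and because in the idle-distribution problem only idle ants change tasks, the non-idle background is static; in particular task $2$ contains exactly one ant until an idle ant recruits into it.

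Next I would reduce termination to the first arrival of an idle ant in the sparse task. The target state must place a positive (constant) fraction of the $m=t$ idle ants in \emph{every} task, so task $2$ must eventually hold at least one idle ant. Let $\tau$ be the first round in which task $2$ contains an idle ant; then any successful run terminates at a round $f\ge\tau$, so it suffices to prove $E[\tau]=\Omega(n)$.

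To bound $E[\tau]$ I would track how an idle ant can first enter task $2$. By the idle-movement rule, an idle ant can set $T_a^{r+1}=2$ only if it interacted in round $r$ with an ant whose (recruiting) task is $2$. For every round $r<\tau$ the only such ant is the lone non-idle ant in task $2$, so the first entrant must interact with that specific ant. A fixed idle ant performs $R$ interactions per round, each landing on a given ant with probability $1/n$, so it meets the lone ant with probability at most $R/n$; a union bound over the $m$ idle ants gives conditional success probability at most $mR/n$ per round. Hence $\tau$ stochastically dominates a geometric variable with parameter $mR/n$, so $E[\tau]\ge n/(mR)=\Omega(n)$ because $m=t$ and $R$ are constants. (The same $\Omega(n)$ bound survives the opposite interaction convention in which the lone ant initiates the $R$ interactions, since it then reaches some idle ant at rate at most $mR/n$ per round as well.)

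The main obstacle is justifying that the sparse task really \emph{stays} sparse until the first idle ant arrives, so that the per-round discovery probability is only $\Theta(1/n)$ and is not amplified by a snowball of freshly arrived recruiters. This is precisely why the adversary keeps $m$ constant: with $m=\Theta(n)$ idle ants the first arrival would occur within $O(1)$ rounds and the occupancy of task $2$ would then grow geometrically, collapsing the bound to $O(\ln n)$. Keeping $m$ constant, which matches the biologically relevant regime of \emph{young} nests with few idle ants, guarantees that even the very first arrival into the sparse task costs $\Omega(n)$ expected rounds, already forcing $f=\Omega(n)$. I would also remark that the staticness of the non-idle background is what rules out a competing mechanism (non-idle recruitment flooding task $2$); this is a feature of the idle-distribution model, but if non-idle recruitment were allowed one would additionally fix the demand so that no non-idle ant is recruited into task $2$.
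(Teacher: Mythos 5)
Your proposal is correct and follows essentially the same route as the paper's proof: an adversarial instance with a constant number of idle ants all starting in one task, a sparse task containing a single (static) non-idle ant, and a geometric waiting-time argument giving expected $\Omega(n)$ rounds until the first idle ant can recruit into it (the paper uses $t=2$, $m=2$, $X^0=\{n-3,1\}$). Your version is in fact slightly more careful than the paper's, making explicit the union bound over the idle ants, the stochastic domination, and the staticness of the non-idle background that keeps the sparse task sparse.
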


\begin{proof}
	We will prove the theorem by an example that requires at least $\Omega(n)$ rounds to terminate successfully.
	Let $t = 2$ be the number of tasks, and let $m = 2$,
	such that, to terminate successfully, a single idle ant must belong to each task.
	Let us define the following assignment vector: $X^0 = \{n-3,1\}$.
	In addition, both idle ants start in task $1$.
	
	To recruit to task $2$, an idle ant must interact with an ant in task $2$.
	There is only one ant in task $2$, thus, each idle ant has a $\frac{R}{n}$ probability to interact with it.
	Since there is a constant number of idle ants, it takes at least an expected $O(\frac{n}{R}) = O(n)$ rounds.
\end{proof}

\Xomit{
	
	\subsection{Pairing Algorithm}
	
	The Pairing Algorithm solves the idle allocation problem in the following way:
	
	
	Each round, each non-\emph{Terminated} idle ant interacts with $R$ other ants at random.
	If it is \emph{Searching}, and interacts with an ant of a task \emph{not} in its history, it switches to that task.
	If it is \emph{Waiting}, and interacts with a \emph{Waiting} ant of the same task,
	the ants decide (by random coin-flip) which ant terminates,
	deciding to recruit to the current task and never switch again by becoming \emph{Terminated},
	while the other ant rejects the task by adding the task to its history and becoming \emph{Searching}.
	If it interacts with several such ants, it arbitrarily chooses one to act accordingly.
	
	In each round $r$, each idle ant performs the \texttt{Step} method (see Algorithm \ref{alg_allocation}).
	The \texttt{Step} method for ant $a$ receives as arguments $T_a^r$, $S_a^r$, $H_a$, the current state of $a$, the task it recruits to,
	and its history (the list of tasks rejected by $a$), and $\{ T_{b_1}^r$, $S_{b_1}^r, \dots, T_{b_R}^r, S_{b_R}^r \}$,
	the current state and task of all $R$ ants with which $a$ interacts in round $r$.
	
	%
	
	\begin{algorithm}
		\caption{Pairing Algorithm; \texttt{Step} method.}
		\label{alg_allocation}
		\begin{algorithmic}[1]
			\Function{Step}{$T_a^r,S_a^r,H_a, T_{b_1}^r,S_{b_1}^r, \dots, T_{b_R}^r,S_{b_R}^r$}
			\If{$S_a^r = I_T$}																		\Comment{\emph{Terminated} - do nothing}
			\State Return
			\ElsIf{($S_a^r=I_S$) and ($\exists i: T_{b_i}^r \notin H_a$)}							\Comment{\emph{Searching} and not in history}
			\State Set $T_a^{r+1} := T_{b_i}^r$														\Comment{Switch task and set state to \emph{Waiting}}
			\State Set $S_a^{r+1} := I_W$
			\ElsIf{($S_a^r=I_W$) and ($\exists i: S_{b_i}^r=I_W, T_a^r=T_{b_i}^r$)} 				\Comment{\emph{Waiting} of same task}
			\If{\emph{won coin-flip}}
			\State Set $S_a^{r+1} := I_T$ 														\Comment{Won, terminate \& fix to current task}
			\Else
			\State $H_a^{r+1} := H_a^r \cup T_{b_i}^r$											\Comment{Lost, add to history and search for new task}
			\State $S_a^{r+1} := I_S$
			\EndIf
			\EndIf
			\EndFunction
		\end{algorithmic}
	\end{algorithm}
	
	
	The Pairing Algorithm, while not based on biological observations by itself,
	presents a possible algorithm for idle ants in ant colonies to allocate among the tasks they recruit to.
	While not necessarily resembling the actual algorithm in nature, it is provided for completeness
	and to show that it is computationally possible to reach the state assumed for the Recruitment Algorithm (Algorithm \ref{alg_recruit}) and Corollary \ref{cor_slack_ln}.
	
	We now turn our attention to analyze the runtime of Algorithm \ref{alg_allocation}.
	We prove that, after an expected $O(n)$ rounds, a constant fraction of $m$, the number idle ants, recruit to each task $i$,
	thus satisfying the assumption of the Recruitment Algorithm and Corollary \ref{cor_slack_ln} when $m$ is itself a constant fraction of $n$,
	the total number of ants in the colony.
	
	\subsubsection{Algorithm Analysis}
	
	\begin{theorem}
		\label{theorem_alg_div_runtime}
		The expected runtime of Algorithm \ref{alg_allocation} is $O(n)$ rounds in the worst case.
	\end{theorem}
	
	\begin{proof}

		First note that, in any round $r$, there are at most $\frac{m}{t+1}$ \emph{Searching} ants which rejected all tasks.
		Let $a$ be a \emph{Searching} ant which rejected all tasks, such that $H_a^r = [1,\dots,t]$,
		the claim is true since, to reject a task, $a$ has to interact, and lose in a coin-flip, to a \emph{Waiting} ant
		in each task, which then becomes \emph{Terminated}.
		Thus, for each such ant, there are at least $t$ other \emph{Terminated} ants, one in each task,
		leading to at most $\frac{m}{t+1}$ \emph{Searching} ants which rejected all tasks.
		
		Additionally, for each task task contains an odd number of \emph{Waiting} ants,
		one \emph{Waiting} ant will not be able to find a "match" for a coin-flip,
		and thus one ant will necessarily remain \emph{Waiting} so long as no additional \emph{Waiting}
		ants switch to that task.
		
		Let $m_T^r$ denote the number of \emph{Terminated} ants in round $r$,
		we denote by $k^r$ the number of \emph{Searching} ants which did not reject all tasks,
		and \emph{Waiting} ants except those in tasks with an odd number of \emph{Waiting} ants,
		such that $k^r \geq m - \frac{m}{t+1} - t - m_T^r$.
		In the initial round it holds that $k^0 \geq m - t$, since no ant is \emph{Terminated} yet,
		nor any task was rejected.
		
		\begin{lemma}
			$k^r$ diminishes at least in half in an expected $O(n)$ rounds.
		\end{lemma}
		
		\begin{proof}
			Since there is at least one ant in each task, each \emph{Searching} ant which did not reject all tasks
			has at least $\frac{1}{n}$ to interact with an ant of a task it did not reject,
			thus it switches tasks and becomes \emph{Waiting} in an expected $O(n)$ rounds.
			For each task $i$, all \emph{Waiting} ants except one (in case the task contains an odd number of \emph{Waiting} ants)
			have a probability of at least $\frac{1}{n}$ to interact with another \emph{Waiting} ant of the same task each round,
			and thus pair and coin-flip in an expected $O(n)$ rounds.
			
			Each \emph{Searching} ant in round $r$ becomes \emph{Waiting} ant in $O(n)$ rounds,
			and all such \emph{Waiting} ants (except $t$, one for each task) will interact in $O(n)$ rounds,
			causing half to become \emph{Terminated}.
			Thus, let $q$ be the round after these ants terminated,
			it holds that both $q = r + O(n)$, and $k^q \leq \frac{k^r}{2}$.
		\end{proof}
		
		After $O(t)$ times, each non-\emph{Terminated} ant that remains is either
		\emph{Waiting} in a task with no additional \emph{Waiting} ants,
		or is \emph{Searching} but rejected all tasks.
		We can then conclude with the following lemma:
		
		\begin{lemma}
			After an expected $O(n)$ rounds, each task contains $O(m)$ ants in state $I_T$.
		\end{lemma}
		\begin{proof}
			Since $t$ is constant, after $O(t)$ iterations of the previous lemma,
			the total number of expected rounds is still $O(n)$.
			The task containing the maximal number of \emph{Terminated} ants contains no more
			than $\frac{m}{2}$ \emph{Terminated} ants, since for each \emph{Terminated} ants,
			another ant interacted with it and rejected that task.
			
		\end{proof}
		
		\fix{PROOF INCOMPLETE}
		
	\end{proof}
	
}

\section{Discussion}
\label{section_discussion}
In large enough colonies, such as a nest of $5000$ ants
(a standard size of a mature colony in various ant species \cite{GordonBook}),
a runtime of $O(n)$ is unacceptable:
with $R=1$ and round length of $1$ second, it would take a nest \emph{over an hour} to meet the demand.
In recent years, biological observations \cite{lazybones,WhyLazy,LazyInNature} discovered that,
in many species of ants, a substantial number (20\%-50\%) of the ants in the colony
are idle, even when there is work to be done.

We have presented a new model for the task allocation problem, inspired by the biology of ant colonies.
Under this model, we devised an algorithm that mimics the behavior of ant colonies in order to solve the task allocation problem.
The time complexity of the algorithm is $O(\ln n)$ rounds when the number of idle ants constitutes
a constant fraction of the number of ants in the colony (the time complexity of the algorithm without idle ants is $O(n \ln{n})$ rounds).
On the other hand, when assuming there are no idle ants,
the lower bound is $\Omega(n)$ rounds which is exponentially worse.
This requirement for a constant fraction of the colony to be idle ants is actually portrayed in nature.
Biologists observed that, in ant colonies, approximately 20-50\% of the worker ants
are idle \cite{lazybones,WhyLazy,LazyInNature}.
The gap described in the previous paragraph provides a possible explanation for this phenomenon
which, to the best of our knowledge, is still unexplained by biologists.

In addition, we presented a model for distributing the idle ants among the tasks,
such that enough idle ants recruit to each task, as assumed by the gap shown.
In some species, the survival rate of young nests is very low;
only 10\% of new nests survive through the first year \cite{LowSurvive,GordonBook}.
Our model gives a possible explanation for  this phenomenon which is also, to the best of our knowledge, as of yet unexplained.
Ants in new ant colonies, likely to encounter sudden changes in demands for different tasks, might be incentivized to become idle to improve task allocation.
Our model shows a lower bound of $\Omega(n)$ for the process of idle distribution,
showing why the total time to do both task allocation and idle distribution is much longer
than just task allocation when there are already enough idle ants,
thus providing a possible explanation for the low survivability of new ant colonies.

\subsection{Future Work}

Following our work, several additional topics arise that may be of interest:

An interesting topic is that of density, the probability of interacting with another ant.
Our model assumes a completely uniform density, such that interactions are uniformly random,
i.e., each ant has an equal probability to interact with each other ant.
Consider what would happen if an ant interacts with higher probability with some subset of ants,
such as ants of the same task, ants previously interacted with, or even according to the physical location of the ants.


Further topics for research:
\begin{itemize}
	\item
	Interactions with ants of different nests,
	each nest aiming to achieve its own demands without interference,
	perhaps even with competing demands.
	
	\item
	Differentiation among ants, such that each ant may contribute different work to each task,
	or may even be excluded from certain tasks completely.
	
	\item
	Incorporating work or success rate, such that the \emph{working} and \emph{not-needed} states are assigned probabilistically,
	according to the assignment and demand of each task.
\end{itemize}

\section*{Acknowledgements}
The authors thank Hanoch Levy, Roman Kecher, and Uri Zwick for very helpful discussions.

\clearpage
\nocite{*}
\bibliographystyle{abbrv}

\end{document}